\newcommand{\protocoltypo}[2]{$\mathtt{#1}_\text{#2}$\xspace}
\newcommand{\PV}{\protocoltypo{PV}{}}
\newcommand{\QPV}[1]{\protocoltypo{QPV}{#1}}
\newcommand{\WSE}{\protocoltypo{WSE}{}}
\newcommand{\Operatortypo}[1]{\mathbb{#1}}
\newcommand{\id}{\Operatortypo{I}}
\newcommand{\F}{\Operatortypo{F}}
\newcommand{\stateset}[1]{\mathcal{#1}}
\newcommand{\C}{\stateset{C}}
\newcommand{\D}{\stateset{D}}
\newcommand{\sS}{\stateset{S}}
\newcommand{\AliceBob}[2]{\mathsf{#1}_{#2}}
\newcommand{\A}[1]{\AliceBob{A}{#1}}
\newcommand{\B}[1]{\AliceBob{B}{#1}}
\newcommand{\V}[1]{\AliceBob{V}{#1}}
\newcommand{\Prv}{\AliceBob{P}{}}
\newcommand{\M}[1]{\AliceBob{M}{#1}}
\DeclareMathOperator{\distance}{distance}
\newtheorem{Thm}{Theorem}
\newtheorem{Def}[Thm]{Definition}
\newtheorem{Prp}[Thm]{Property}
\newtheorem{Crl}[Thm]{Corollary}
\newtheorem{Lmm}[Thm]{Lemma}
\begin{document}

\title{A Tight Lower Bound for the BB84-states Quantum-Position-Verification Protocol}
\author{Jérémy Ribeiro}
\affiliation{Laboratoire Aimé Cotton, CNRS, Université Paris-Sud and ENS Cachan,  F-91405 Orsay, France}
\author{Frédéric Grosshans}
\email{frederic.grosshans@u-psud.fr}
\affiliation{Laboratoire Aimé Cotton, CNRS, Université Paris-Sud and ENS Cachan,  F-91405 Orsay, France}
\begin{abstract}
We use the entanglement sampling techniques developed by 
Dupuis, Fawzi and Wehner \cite{DFW15} to find a lower bound on the entanglement needed by a coalition of cheaters attacking the quantum position verification protocol 
using the four BB84 states \cite{CFG+10, BCF+14} (\QPV{BB84}) 
in the scenario where the cheaters have no access 
to a quantum channel but share a (possibly mixed) entangled state $\tilde{Φ}$. 
For a protocol using $n$ qubits, a necessary condition for cheating is that the 
max- relative entropy of entanglement $E_{\max}(\tilde{Φ})≥n-O(\log n)$. 
This improves previously known best lower bound by a factor $\sim4$, 
and it is essentially tight, since it is vulnerable to a teleportation based attack using $n-O(1)$ ebits of entanglement.
\end{abstract}
\pacs{03.67.Dd, 03.67.Mn, 89.70.Cf}

\maketitle



The very first (classical) position verification (\PV) protocols 
have been distance bounding 
protocols, introduced in 1993 \cite{BrandsChaum94} 
to prevent  man-in-the-middle attacks. 
Based on the speed-limit $c$ 
on information propagation imposed by special relativity, 
they can only work 
when the prover $\Prv{}$ and the verifier $\V{}$ are close, 
and are useless against nearby malicious adversaries $\M{}$,
\emph{i.e.}\ when $\distance(\M{},\V{})≤\distance(\Prv,\V{})$ \cite{ABK+11}.
\PV  protocols by a coalition of distant verifiers $\{\V{i}\}$ are therefore
needed in such situation, as they allow to build localized authentication protocol,
but also many other cryptographic applications, like key distribution at a
specific place \cite{CGMO09}.
However, Chandran \emph{et al.} have shown in 2009 \cite{CGMO09} that no 
classical \PV protocol can be computationally secure against a coalition 
$\{\M i\}$ of malicious provers. They only found a protocol secure in the
bounded retrieval model.

Quantum position verification (\QPV{}) protocols appeared the next year in 
the scientific literature, with publications of three independent teams 
\cite{PatentKMSB06, KMS11, Malaney10a, Malaney10b, CFG+10, BCF+14}.
Even in the quantum case, unconditional security is unattainable \cite{BCF+14},
and a universal attack using an exponential amount of entanglement
as been found by Beigi and König \cite{BeigiKoenig11}.
To guarantee the security
of a \QPV{} protocol one 
either need a computational hypothesis \cite{Unruh14}
or a bound on the quantum entanglement shared between the cheaters 
\cite{LauLo11, BCF+14, BeigiKoenig11, TFKW13}. 

The present work is in the latter framework, where the cheating coalition
$\{\M i\}$ only has access to a limited amount of entanglement.
Despite the exponential universal attack \cite{BeigiKoenig11},
all lower bounds found so far 
have been linear \cite{BeigiKoenig11, TFKW13} 
or sublinear \cite{LauLo11, BCF+14}.  
To our knowledge, the protocol showing the best security in this
framework is the protocol using mutually unbiased bases \QPV{MUBs} 
proposed by Beigi and König in \cite{BeigiKoenig11}. 
A $n$-qubits implementation of
\QPV{MUBs} is secure against adversary holding less that $n/2$ ebits.
However, \QPV{MUBs} needs the coherent manipulation of $n$ qubits and
is therefore impossible to implement with present day technologies.

\QPV{BB84}, introduced in \cite{CFG+10,BCF+14} and defined 
below, 
is experimentally much simpler since it essentially uses quantum key distribution
components \cite{BB84,SBPC+09}, and Tomamichel \emph{et al.}\ \cite{TFKW13} have proved its
security against adversary holding less than $-\log_2 (\cos^2(π/8))⋅n\simeq 0.22845⋅n$ ebits
of entanglement. 
We improve this bound to $n - O(\log n)$ ebits. Since a teleportation-based explicit attack 
using $n-O(1)$ ebits is known \cite{KMS11,LauLo11}, this bound is tight.

We start this letter by giving some useful properties of 
the min-entropy $H_{\min}$ and 
the max- relative entropy of entanglement $E_{\max}$,
a related entanglement monotone.
Since our security proof is based on an adaptation of the 
entanglement sampling based security 
proof \cite{DFW15} of weak string erasure (\WSE) in the noisy storage model 
(NSM),
we then describe this protocol. We then show its
security the noisy entanglement model (NEM)
and use it to show the security of \QPV{BB84}.

In the following $\sS(A)$ is the set of quantum states of the system $A$.

\begin{Def}[min-entropy]\label{Def:Hmin}
  Let $ρ∈\sS(AB)$ be a bipartite state.
  The conditional min-entropy $H_{\min}(A|B)_{ρ}$ is
  \[
  H_{\min}(A|B)_{ρ}:=
   -\inf_{τ∈\sS(B)} \inf\left\{λ∈\mathbb R : ρ ≤ 2^{λ}\id_A⊗τ\right\}
  \]
\end{Def}

The following property shows the conditional min-entropy of a 
classical-quantum (\emph{cq}) state is essentially the logarithm of the 
probability to guess the classical part from the quantum part.
\begin{Prp}\label{Prp:Hminpguess}\cite[theorem 1]{KRS09}
 Let $ρ∈\sS(XB)$ be a \emph{cq}-state, \emph{i.e.} a state of the form
 $ρ=∑_xp_x\ket{x}\bra{x}⊗τ_x$ with $τ_x∈\sS(B) ∀x$. Then, 
 \[
   H_{\min}(X|B)_{ρ}=-\log_2 p_{\text{guess}}(X|B)_{ρ},
 \]
 where $p_{\text{guess}}(X|B)_{ρ})$ is the maximal probability of guessing the 
 value of $X$ from an optimal measurement on $B$.
\end{Prp}

The max- relative entropy of entanglement 
has been introduced by Datta \cite{Datta09} as
an entanglement monotone closely related to $H_{\min}$. 
\begin{Def}[max- relative entropy of entanglement]\label{Def:Emax}
Let $ρ∈\sS(AB)$ be a bipartite state.
Its max- relative entropy of entanglement is noted
$E_{\max}(ρ)_{A;B}$ or $E_{\max}(A;B)_{ρ}$ and is 
$$
  E_{\max}(A;B)_{ρ}:=
  \inf_{\mathclap{σ∈\D}} \inf\left\{λ∈\mathbb R : ρ ≤ 2^{λ}σ\right\}$$
where $\D$ is the set of separable states of $\sS(AB)$.
\end{Def}


\begin{Prp}[monotony of $E_{\max}$]\cite[theorem 1]{Datta09}\label{Prp:Emaxmon}
The max- relative entropy of entanglement $E_{\max}$ 
is an entanglement monotone, \emph{i.e.} it can only decrease under 
local operations and classical communications (LOCC).
More formally,
let $Λ$ be completely positive trace preserving (CPTP) map $\sS(AB)→\sS(A'B')$
which can be achieved through LOCCs.
\[
 E_{\max}(ρ)_{A;B}≥ E_{\max}(Λ(ρ))_{A';B'}
\]
\end{Prp}

In order to establish the theorem \ref{Thm:EmaxvsHmin} linking $E_{\max}$
and $H_{\min}$, we will need the following lemma :
\begin{Lmm}\label{Lmm:sigmaItau}
Let $\D(A{:}B)⊂\sS(A,B)$ be the set of separable states,
\emph{i.e.}\ the convex hull of the set of product states $\sS(A)⊗\sS(B)$. 
For any state $σ∈\D(A{:}B)$, there exists a state $τ∈\sS(B)$ such that
\(
  σ≤\id⊗τ
\).
\end{Lmm}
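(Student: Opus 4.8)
The plan is to exploit the product structure of separable states directly, reducing everything to the elementary fact that a density operator is bounded above by the identity. First I would write an arbitrary $σ∈\D(A{:}B)$ as a finite convex combination of product states, $σ=∑_i p_i\, α_i⊗β_i$, with $p_i≥0$, $∑_i p_i=1$, $α_i∈\sS(A)$ and $β_i∈\sS(B)$; since in finite dimension the product states form a compact set, Carathéodory's theorem guarantees such a finite decomposition (if one prefers to work with a general probability measure over product states instead of a finite sum, nothing below changes).

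The key observation is that every density operator on $A$ satisfies $α_i≤\id_A$: being positive semidefinite with unit trace, all its eigenvalues lie in $[0,1]$. Tensoring this inequality with the positive operator $β_i≥0$ preserves the operator order, because $(\id_A-α_i)⊗β_i$ is a tensor product of positive semidefinite operators and hence positive semidefinite; therefore $α_i⊗β_i≤\id_A⊗β_i$ for every $i$.

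Summing these inequalities with the weights $p_i$ then yields
\[
  σ=∑_i p_i\, α_i⊗β_i \;≤\; ∑_i p_i\, \id_A⊗β_i \;=\; \id_A⊗\Bigl(∑_i p_i β_i\Bigr)=:\id_A⊗τ,
\]
and $τ:=∑_i p_i β_i$, being a convex combination of states on $B$, is itself a state in $\sS(B)$. This is exactly the claimed bound $σ≤\id⊗τ$.

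I do not expect a real obstacle here; the only two points needing a line of care are the passage from ``separable'' to an explicit (finite) convex decomposition into product states, and the monotonicity of the tensor product with respect to the operator order, $X≤Y\Rightarrow X⊗Z≤Y⊗Z$ for $Z≥0$, which is immediate. The one place where a little more work would be required is if $\D(A{:}B)$ were defined as the \emph{closed} convex hull of product states: one would then need a compactness argument to extract a limiting witness $τ$ from the witnesses of an approximating sequence. But the statement as given concerns the convex hull itself, so this subtlety does not arise.
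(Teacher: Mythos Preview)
Your proposal is correct and follows essentially the same argument as the paper: write $σ$ as a convex combination of product states, bound each $A$-factor by $\id_A$, and take $τ$ to be the corresponding convex combination of the $B$-factors. The additional justifications you provide (Carathéodory for finiteness, why $α_i≤\id_A$, and the monotonicity of tensoring by a positive operator) are all sound but not strictly needed at the level of detail the paper adopts.
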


\begin{proof}
Let $σ∈\D(A;B)$, there exists a mixture $\{p_i,τ_{A}^i⊗τ_{B}^i\}_i$ 
of states of $\sS(A)⊗\sS(B)$
such that
  \begin{align*}
   σ &=\sum_i p_i τ_{A}^i ⊗τ_{B}^i &&\text{since } σ∈\D(A{:}B)\\
     &≤\sum_i p_i \id_{A} ⊗τ_{B}^i &&\text{since } ∀i,τ_{A}^i≤\id_{A}\\
     &=\id_{A} ⊗\sum_i p_i τ_{B}^i = \id_A⊗τ
     	&&\text{defining } τ:=\sum_i p_i τ_{B}^i
  \end{align*}
\end{proof}

\begin{Thm}
\label{Thm:EmaxvsHmin}
For any bipartite state $ρ∈\sS(AB)$,
\[
  E_{\max}(A;B)_ρ≥-H_{\min}(A|B)_ρ
\]
\end{Thm}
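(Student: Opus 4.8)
The plan is to compare directly the two variational expressions, using Lemma~\ref{Lmm:sigmaItau} as the bridge. Unpacking Definition~\ref{Def:Hmin},
\[
  -H_{\min}(A|B)_\rho = \inf_{\tau\in\sS(B)}\inf\{\lambda\in\mathbb R : \rho\leq 2^{\lambda}\,\id_A\otimes\tau\},
\]
while Definition~\ref{Def:Emax} reads $E_{\max}(A;B)_\rho=\inf_{\sigma\in\D}\inf\{\lambda\in\mathbb R:\rho\leq 2^{\lambda}\sigma\}$. So it suffices to show that every pair $(\sigma,\lambda)$ that is feasible for the $E_{\max}$ problem yields a pair $(\tau,\lambda)$ feasible for the $-H_{\min}$ problem with the \emph{same} value $\lambda$; taking infima on both sides then gives the inequality.

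Concretely, I would fix an arbitrary separable $\sigma\in\D$ and $\lambda\in\mathbb R$ with $\rho\leq 2^{\lambda}\sigma$ (if no such pair exists, $E_{\max}(A;B)_\rho=+\infty$ and there is nothing to prove). By Lemma~\ref{Lmm:sigmaItau} there is a state $\tau\in\sS(B)$ with $\sigma\leq\id_A\otimes\tau$. Since $2^{\lambda}>0$, multiplying this operator inequality by $2^{\lambda}$ preserves it in the L\"owner order, and chaining with $\rho\leq 2^{\lambda}\sigma$ gives $\rho\leq 2^{\lambda}\,\id_A\otimes\tau$. Hence $\lambda$ lies in the inner feasible set associated with this particular $\tau$, so $-H_{\min}(A|B)_\rho\leq\lambda$.

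Since this holds for every feasible $(\sigma,\lambda)$, taking the infimum over all such pairs yields $-H_{\min}(A|B)_\rho\leq E_{\max}(A;B)_\rho$, which is the theorem. I do not expect a genuine obstacle here: the only points worth a line of care are that the nested infima are handled correctly (the feasible sets shrink monotonically in $\lambda$, so the double infimum is unambiguous), that scaling by the positive number $2^{\lambda}$ respects positivity, and the trivial edge case where $\rho$ has no separable dominating state. All the substantive work — in particular the only place where separability of $\sigma$ is used — is already packaged into Lemma~\ref{Lmm:sigmaItau}.
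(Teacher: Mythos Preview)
Your proof is correct and follows essentially the same route as the paper's: use Lemma~\ref{Lmm:sigmaItau} to turn any separable dominator $2^{\lambda}\sigma$ into a dominator of the form $2^{\lambda}\,\id_A\otimes\tau$, then invoke the variational definition of $H_{\min}$. The only difference is presentational---you work with an arbitrary feasible pair $(\sigma,\lambda)$ and take infima at the end, whereas the paper directly plugs in the value $E_{\max}(A;B)_\rho$ for $\lambda$; your version is slightly more careful about whether the infimum is attained, but the argument is the same.
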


\begin{proof}
  For any separable state $σ∈\D(A:B)$, there exists a state $τ∈\sS(B)$ such that
  \begin{align*}
    ρ&≤2^{E_{\max}(A;B)_ρ}σ &&\text{(from definition \ref{Def:Emax})}\\
     &≤2^{E_{\max}(A;B)_ρ}\id_A⊗τ &&\text{(lemma \ref{Lmm:sigmaItau})}
  \end{align*}
  The definition of $-H_{\min}$ as lower bound (definition \ref{Def:Hmin}) then
  implies
  \(
    H_{\min}(A|B)_ρ≤-E_{\max}(A;B)_ρ.
  \)
\end{proof}


Now that we have the relevant properties of $H_{\min}$ and $E_{\max}$, 
we study the weak string erasure (\WSE) protocol.
It was introduced, together with the noisy storage model (NSM)
by König \emph{et al.} \cite{KWW12}
to build secure bipartite protocols. 
The NSM is based on a technological limit imposed on quantum memories : 
after a delay $Δt$, the quantum state they can hold decoheres and becomes noisy.
In this model, a protocol is split in two phases.
A bipartite protocol involving the traditionally named Alice ($\A{}$) 
and Bob ($\B{}$)
can therefore be seen as a quadripartite protocol between two coalitions : it first involves early-Alice ($\A1$) and early-Bob ($\B1$), 
and then, after $Δt$, later-Alice ($\A2$) and later-Bob ($\B2$). 
A noisy quantum memory held by Bob is then modeled by a noisy quantum channel 
$\F : \B1→\B2$.

The \WSE protocol proposed in \cite{KWW12} can be described as follows, 
for honest Alice(s) and Bob(s):
\begin{enumerate}
  \item $\A1$ choses uniformly at random  $X^n=\{x_i\}_i$ and $Θ^n=\{θ_i\}_i$, 
    two bit strings of length $n$.
  \item $\B1$ choses uniformly at random $\tilde{Θ}^n=\{\tilde{θ}_i\}_i$, 
    a bit string of length $n$.
  \item \label{WSE:A1sends}
    $\A1$ sends to $\B1$ the quantum state 
    $\bigotimes_i\hat H^{θ_i}\ket{x_i}$, where
    $\hat H$ is the Hadamard operator and 
    $\{\ket{0}, \ket1\}$ the computational basis of a qubit.
    It is the BB84 encoding of 
    the string $X^n$ in the basis $Θ^n$.       
  \item \label{WSE:B1meas}
	$\B1$ measures the qubits in the bases $\tilde{Θ}^n$, and gets the string 
	$\tilde X^n$.
  \item \label{WSE:wait}
	Both parties wait the time $Δt$. The classical memories of Alice and Bob
	corresponds to classical channels allowing $\A1$ to send $\{X^n,Θ^n\}$ to $\A2$ and 
	$\B1$ to send $\{\tilde X^n,\tilde{Θ}^n\}$ to $\B2$.
  \item $\A2$ sends $Θ^n$ to $\B2$.
  \item \label{WSE:B2final}
	$\B2$ computes $I=\{i : θ_i=\tilde{θ}_i\}$ and $\tilde X^I=X^I$
\end{enumerate}

We are interested here by the correctness of the protocol,
but only  by its security against a dishonest Bob.
\begin{Def}\label{Def:security}
 A \WSE protocol is $λ$-secure against Bob if the probability for $\B2$
   to correctly guess the string $X^n$ is smaller than $2^{-nλ}$. More formally,
   let $\C(\A2,\B2)$ be the set of all possible states $σ_{\A2, \B2}$   
   which can be obtained at the end of the protocol if Alice follows it
   but Bob is dishonest.
   The protocol is secure for Alice if,  
   \(∀σ∈{\C(\A2,\B2)}\),
   \[\tfrac{1}{n}H_{\min}(X^n|\B2)_{σ}≥λ.\] 
\end{Def}

Instead of the $λ$-security, which ensures exponential security with $n$
as long as $λ>0$, one can also be interested in 
the $ε$-security for a fixed $n$ :
\begin{Def}
A protocol is $ε$-secure iff, 
for any possible dishonest strategy,
the probability $p_{\text{cheat}}$ for a dishonest adversary to win is
\(p_{\text{cheat}}<ε\).
\end{Def}
\begin{Lmm}\label{Lmm:lambdavsepsilon}
For a protocol like \WSE or \QPV{BB84}, where the goal of the cheater is
to guess a classical string $X^n$,
$λ$-security and $ε$-security are equivalent notions when 
\[
 ε=2^{-nλ} ⇔ λ=-\tfrac1n\log_2ε
\]
\end{Lmm}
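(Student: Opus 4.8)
The plan is to reduce everything to Property~\ref{Prp:Hminpguess}, which identifies the conditional min-entropy of a \emph{cq}-state with minus the logarithm of the optimal guessing probability. First I would observe that, at the end of either \WSE or \QPV{BB84}, the joint state $\sigma$ of the target string $X^n$ (classical, held by honest Alice) and the adversary's final quantum register ($\B2$) is by construction a \emph{cq}-state of the form $\sigma=\sum_{x^n}p_{x^n}\ket{x^n}\bra{x^n}\otimes\tau_{x^n}$. Moreover a dishonest adversary wins precisely when it outputs the correct value of $X^n$, and the best it can do is an optimal measurement on $\B2$; hence for a fixed dishonest strategy producing $\sigma$ the winning probability is exactly $p_{\text{cheat}}=p_{\text{guess}}(X^n|\B2)_{\sigma}$.

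Next I would apply Property~\ref{Prp:Hminpguess} to get $H_{\min}(X^n|\B2)_{\sigma}=-\log_2 p_{\text{guess}}(X^n|\B2)_{\sigma}=-\log_2 p_{\text{cheat}}$, i.e.\ $p_{\text{cheat}}=2^{-H_{\min}(X^n|\B2)_{\sigma}}$. Taking the supremum over all dishonest strategies, equivalently over all $\sigma\in\C(\A2,\B2)$, turns this into $p_{\text{cheat}}^{\max}=2^{-n\lambda^{\star}}$, where $\lambda^{\star}:=\inf_{\sigma\in\C(\A2,\B2)}\tfrac1n H_{\min}(X^n|\B2)_{\sigma}$ is exactly the largest $\lambda$ for which the protocol is $\lambda$-secure in the sense of Definition~\ref{Def:security}.

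Finally I would chain the two statements: the protocol is $\lambda$-secure iff $\lambda^{\star}\ge\lambda$ iff $p_{\text{cheat}}^{\max}=2^{-n\lambda^{\star}}\le 2^{-n\lambda}$, which is $\varepsilon$-security with $\varepsilon=2^{-n\lambda}$; since $\lambda\mapsto 2^{-n\lambda}$ is a decreasing bijection, inverting it yields the equivalent form $\lambda=-\tfrac1n\log_2\varepsilon$. I do not expect a real obstacle here: the only points requiring a little care are (i) verifying that the end-of-protocol state really is a \emph{cq}-state so that Property~\ref{Prp:Hminpguess} applies, and (ii) the purely cosmetic mismatch between the strict inequality ``$p_{\text{cheat}}<\varepsilon$'' in the definition of $\varepsilon$-security and the non-strict ``$\le$'' coming out of the min-entropy bound, which is harmless and can be handled either by stating both notions with ``$\le$'' or by absorbing it into the supremum over strategies.
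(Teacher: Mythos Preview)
Your proposal is correct and follows exactly the route the paper intends: the paper's own proof is the single sentence ``This follows directly from the definitions and property~\ref{Prp:Hminpguess},'' and what you have written is precisely the unpacking of that sentence. Your two caveats (the \emph{cq}-structure check and the strict/non-strict inequality mismatch) are apt observations that the paper glosses over, but they do not change the argument.
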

\begin{proof}
This follows directly from the definitions and property \ref{Prp:Hminpguess}. 
\end{proof}

In the NSM model, a dishonest Bob changes the above protocol in the following 
way:
\begin{enumerate}
  \item[\ref{WSE:B1meas}.] $\B1$ performs a generalized measurement 
  on the qubits, obtaining a joint \emph{cq}-system ${CQ_1}$. 
  \item[\ref{WSE:wait}.] During the $Δt$ wait, $\B1$ 
    stores this state in his memory. 
    While the classical memory is perfect, 
    the quantum memory is described by the noisy channel $\F$, and
    $\B2$ obtains ${CQ_2}=(\id⊗\F)({CQ_1})$.
  \item[\ref{WSE:B2final}.] At the final step, 
    the global quantum state is $σ∈\sS(X^nΘ^nCQ_2)$,
    where $\A2$ holds the classical information $X^n$ and
    $\B2$ has access to the classical information $Θ^nC$, 
    as well as to the quantum information $Q_2$.
    $\B2$ tries to guess $X^n$ from $Θ^nCQ_2$ and the security
    of the protocol is measured by
    $H_{\min}(X^n|Θ^nCQ_2)_{σ}$.
\end{enumerate}

\begin{Thm}(\cite[theorem 14]{DFW15})\label{Thm:NSMWSE}
  Let Bob storage device $\F$ have a maximal fidelity, as defined in \cite{DFW15}, 
  upper bounded by $η$. 
  The \WSE protocol defined above is $λ$-secure for
   $$λ≤\tfrac12\left[γ\!\left(-1-\tfrac1n\log_2η\right)-\tfrac1n\right],$$
  where $γ$ is the function defined by 
  $$γ(h_{\min}):=
    \begin{cases}
      h_{\min} & \text{if } h_{\min}≥\tfrac12\\
      g^{-1}(h_{\min}) & \text{if }h_{\min}<\tfrac12,  
    \end{cases}$$
  $g(α):=h(α)+α-1$; $h(α):=-α\log_2 α - (1-α)\log_2(1-α)$ is the binary entropy function.
\end{Thm}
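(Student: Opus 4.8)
\emph{Proof sketch.} The plan is to reconstruct the noisy-storage-model argument of \cite{DFW15} from the entanglement-sampling machinery developed there. The first step is to pass to the ``purified'' form of \WSE: in step~\ref{WSE:A1sends}, instead of choosing $X^n,\Theta^n$ and sending the corresponding BB84 states, $\A1$ prepares $n$ EPR pairs, sends one half of each to $\B1$, and keeps the other halves in a register $A^n$; the string $X^n$ is then produced only at step~\ref{WSE:B2final}, by measuring $A^n$ in the bases $\Theta^n$ that $\A2$ has announced. A dishonest Bob cannot distinguish this from the real protocol, so it suffices to lower bound $H_{\min}(X^n|\Theta^n CQ_2)_\sigma$, where $X^n$ is the outcome of a BB84 measurement of $A^n$ in uniformly random bases $\Theta^n$ and $\sigma_{A^nCQ_2}$ is an arbitrary state Bob can hold after his instrument at step~\ref{WSE:B1meas} and the storage channel $\F$.

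The second step bounds Bob's residual entanglement with $A^n$. The register $C$ is classical and so carries no entanglement with $A^n$, while $Q_1$ is routed through the noisy channel $\F$; hence $\F$ is the only channel through which $A^n$ can remain entangled with Bob's data. Using the operational reading of the min-entropy — $2^{-H_{\min}(A^n|CQ_2)_\sigma}$ is, up to the factor $d_{A^n}=2^n$, the optimal fidelity for recovering a maximally entangled state on $A^n$ out of $CQ_2$ — together with the definition of the maximal fidelity of $\F$ in \cite{DFW15}, one gets a bound of the form $\tfrac1n H_{\min}(A^n|CQ_2)_\sigma\ge -1-\tfrac1n\log_2\eta$, the normalisation of the maximal fidelity (as in \cite{DFW15}) being the one that places the right-hand side in the domain $[-1,\infty)$ of $\gamma$.

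The heart of the argument is the entanglement-sampling theorem of \cite{DFW15}: if $A^n$ is an $n$-qubit register and $X^n$ is the outcome of measuring it in uniformly random BB84 bases $\Theta^n$, then the smooth min-entropy rate of the outcome obeys
\[
  \tfrac1n H_{\min}^{\epsilon}(X^n|\Theta^n CQ_2)\ \gtrsim\ \gamma(\tfrac1n H_{\min}(A^n|CQ_2)_\sigma),
\]
where $\gtrsim$ absorbs a penalty depending on the smoothing parameter $\epsilon$; this is exactly where $\gamma$, and in particular its $g^{-1}$ branch, comes from (its endpoints $\gamma(0)=-\log_2\cos^2(\pi/8)$ and $\gamma(-1)=0$ correspond respectively to Bob holding no entanglement and Bob holding $n$ ebits). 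Since $\gamma$ is non-decreasing and $\tfrac1n H_{\min}(A^n|CQ_2)_\sigma\ge -1-\tfrac1n\log_2\eta$ by the second step, this gives $\tfrac1n H_{\min}^{\epsilon}(X^n|\Theta^n CQ_2)\gtrsim\gamma(-1-\tfrac1n\log_2\eta)$.

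Finally I would convert the smooth bound into a statement about the non-smooth guessing probability that governs $\lambda$-security (Definition~\ref{Def:security}): a standard smoothing/gentle-measurement argument bounds $p_{\text{guess}}(X^n|\Theta^n CQ_2)_\sigma$ once $H_{\min}^{\epsilon}(X^n|\Theta^n CQ_2)$ is large, and the resulting trade-off over $\epsilon$ (constrained by the penalty term above) is what produces the overall factor $\tfrac12$ and the additive $-\tfrac1n$, giving $\tfrac1n H_{\min}(X^n|\Theta^n CQ_2)_\sigma\ge\tfrac12[\gamma(-1-\tfrac1n\log_2\eta)-\tfrac1n]$; Lemma~\ref{Lmm:lambdavsepsilon} then rephrases this as the claimed $\lambda$-security. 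The genuinely hard, non-routine part is the entanglement-sampling theorem itself: proving that random BB84 measurements ``sample'' the entanglement of an $n$-qubit state at the sharp rate $\gamma$ requires delicate operator inequalities exploiting the Pauli ($X$/$Z$) structure of the two BB84 bases, and is well beyond the scope of this plan. A secondary subtlety is making the second step fully rigorous, i.e.\ checking that a classical memory together with $\F$ cannot preserve more entanglement with $A^n$ than the maximal fidelity of $\F$ permits.
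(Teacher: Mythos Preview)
The paper does not prove this theorem at all: immediately after the statement it writes ``We defer the reader to \cite{DFW15} for the proof of this theorem.'' So there is no in-paper proof to compare against; the theorem is imported wholesale from \cite{DFW15}, and only its consequence (Lemma~\ref{Lmm:DFW15C11}, which is \cite[corollary~11]{DFW15}) is used further on.

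Your sketch is a reasonable outline of how \cite{DFW15} actually establishes the result, but your decomposition of the constants is slightly off relative to how the paper packages things. In the paper's Lemma~\ref{Lmm:DFW15C11} the factor $\tfrac12$ and the additive $-1$ are already part of the BB84 sampling corollary itself, i.e.\ one directly has $H_{\min}(X^n|\Theta^nCQ_2)\ge\tfrac12[n\gamma(\tfrac1nH_{\min}(A^n|CQ_2))-1]$ without any separate smoothing-to-guessing conversion afterwards. So your ``step~4'' is not a distinct phase here: once one has the min-entropy bound on $A^n|CQ_2$ from the maximal-fidelity argument and plugs it into the sampling corollary, the $\lambda$-bound of Theorem~\ref{Thm:NSMWSE} drops out immediately. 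Apart from this repackaging, your identification of the two substantive ingredients---the maximal-fidelity bound $\tfrac1nH_{\min}(A^n|CQ_2)\ge-1-\tfrac1n\log_2\eta$ and the BB84 entanglement-sampling inequality yielding $\gamma$---matches the structure of \cite{DFW15}.
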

We defer the reader to \cite{DFW15} for the proof of this theorem.
We will now reformulate it in a slightly different security model,
the noisy entanglement model (NEM).


In the NEM, $\A1$, $\A2$, $\B1$ and $\B2$ are actually four different persons, 
localized at different places and connected with (unlimited) classical channels
$\A1→\A2$ and $\B1→\B2$. 
The protocol \WSE is the same as described above, except that 
there is no specific $Δt$ at the step \ref{WSE:wait}. 
$\B1$ and $\B2$ also share 
a (possibly mixed) entangled state $\tilde{Φ}_{Q_1,Q_2}$ instead
of a quantum channel $\F$.
The two models are obviously related, 
since one can create a state $\tilde{Φ}$ by transmitting it
through $\F$, and one can create a channel $\F$ through teleportation, 
using $\tilde{Φ}$ and the unlimited classical channel.

We now adapt theorem \ref{Thm:NSMWSE} to NEM, 
exactly following Dupuis \emph{et al.}'s proof \cite{DFW15} until 
their corollary 11 and slightly changing it after.
As usual, we study the equivalent entangled protocol, 
where $\A1$ prepares a maximally entangled state
$\ket{Φ⁺}^{⊗n}_{AA'}$, sends the $A'^n$ half to $\B1$ 
and gives the $A^n$ half to $\A2$.
$\A2$ finds the string $X^n$ by measuring $A^n$ in the basis $Θ^n$.
\begin{Lmm}\cite[corollary 11]{DFW15}\label{Lmm:DFW15C11}
With the notations above, and $γ$ defined in theorem \ref{Thm:NSMWSE}, 
  we have
  \[
    H_{\min}(X^n|Θ^nCQ_2)_{σ}≥
    \tfrac12\left[nγ\!\left(\tfrac1nH_{\min}(A^n|CQ_2)_{σ}\right)-1\right]
  \]
\end{Lmm}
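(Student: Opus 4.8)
The plan is to observe that this inequality is nothing but Corollary~11 of \cite{DFW15}, which carries over to the NEM with no extra work; only two things have to be done, namely (i) to check that the cited corollary is phrased generally enough to be quoted here, and (ii) to recall its ingredients so that the coefficients $\tfrac12$ and $1$ are accounted for.

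For (i): \cite[corollary~11]{DFW15} is a statement about an \emph{arbitrary} bipartite state together with the operation in which one party measures an $n$-qubit register in a uniformly random BB84 basis $Θ^n$ and then announces $Θ^n$; it makes no use of how the other marginal was produced. In the entangled \WSE protocol described above, the state $σ_{A^nCQ_2}$ just before $\A2$'s measurement --- with $A^n$ Alice's half of the $n$ EPR pairs and $CQ_2$ the pair (classical record $C$, final quantum register $Q_2$) held by Bob --- is exactly of that form, so the corollary applies verbatim, the side information being $CQ_2$. This is the one place where the NSM and the NEM coincide: they differ only afterwards, when $\tfrac1nH_{\min}(A^n|CQ_2)_σ$ has to be lower bounded --- through the storage channel $\F$ in the NSM, through $\tilde{Φ}$ and theorem~\ref{Thm:EmaxvsHmin} here.

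For (ii): I would then retrace the skeleton of \cite{DFW15}'s own argument. One purifies $σ_{A^nCQ_2}$ with a register $E$, so that by min--max duality the hypothesis on $H_{\min}(A^n|CQ_2)_σ$ becomes a bound on $H_{\max}(A^n|E)_σ$. The heart of \cite{DFW15} is then an entropic uncertainty relation for a random BB84 measurement on $n$ qubits: a single-shot relation is responsible for the function $g(α)=h(α)+α-1$, this is tensorised to the $n$ qubits, and a concentration (``sampling'') argument replaces the average behaviour over $Θ^n$ by a guarantee valid for essentially every string $θ^n$. Passing from the resulting smooth-min-entropy estimate back to $H_{\min}$ and optimising the smoothing parameter leaves the overall factor $\tfrac12$ and the additive $-1$, while the two branches of $γ$ are simply the regimes $\tfrac1nH_{\min}(A^n|CQ_2)\ge\tfrac12$, where the measurement cannot further increase the entropy rate and $γ$ is the identity, and $<\tfrac12$, where $γ=g^{-1}$.

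I expect the genuine obstacle to live entirely inside \cite{DFW15}: it is the concentration step, since min-entropy does not tensorise and one has to argue with smoothed quantities and a rather delicate operator inequality to turn an average-case bound into one holding simultaneously for the exponentially many basis strings. None of that needs to be redone here; on our side the proof reduces to the bookkeeping of (i), the real departure from \cite{DFW15} being postponed to the subsequent step where $E_{\max}(\tilde{Φ})$ enters.
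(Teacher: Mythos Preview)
Your proposal is correct and matches the paper's treatment exactly: the paper gives no proof of this lemma at all, simply citing it as \cite[corollary~11]{DFW15} and remarking that one follows Dupuis \emph{et al.}'s proof ``until their corollary~11 and slightly changing it after.'' Your point (i) --- that Corollary~11 is a statement about an arbitrary state plus a random BB84 measurement, hence indifferent to how $CQ_2$ was produced --- is precisely the (implicit) justification the paper relies on, and your sketch (ii) goes beyond what the paper offers.
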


We can now show the security of \WSE in NEM :
\begin{Thm}\label{Thm:NEMWSE}
  Let the dishonest Bobs share a (possibly mixed) entangled state
  $\tilde{Φ}∈\sS(B_1,B_2)$. In the NEM, the \WSE protocol defined above is 
  λ-secure if
  \[
    λ≤\tfrac12\left[γ\!\left(-\tfrac1nE_{\max}(\tilde{Φ})\right)
                    -\tfrac1n\right]
  \]
\end{Thm}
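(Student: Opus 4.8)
The plan is to feed a bound on $E_{\max}$ into the chain already prepared by Dupuis \emph{et al.}'s corollary. Concretely, I would combine Lemma~\ref{Lmm:DFW15C11}, which reduces the $λ$-security of \WSE to a lower bound on $H_{\min}(A^n|CQ_2)_σ$, with Theorem~\ref{Thm:EmaxvsHmin} (which gives $H_{\min}(A^n|CQ_2)_σ ≥ -E_{\max}(A^n;CQ_2)_σ$) and the monotonicity of $E_{\max}$ under LOCC (Property~\ref{Prp:Emaxmon}). The whole content of the proof is then the single inequality
\[
 E_{\max}(A^n;CQ_2)_σ ≤ E_{\max}(\tilde{Φ}),
\]
which I would establish in the equivalent entangled protocol.

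To prove that inequality, the first step is to pick the right bipartition. I would put on the ``left'' Alice's register $A^n$ (ultimately held by $\A2$) \emph{together with} every register $\B1$ ever touches — the half $A'^n$ he receives from $\A1$ and his share $Q_1$ of $\tilde{Φ}$ — and on the ``right'' $\B2$'s share $Q_2$ together with the classical register $C$ that $\B1$ will eventually broadcast. Right after $\A1$ has sent $A'^n$, the global state across this cut is $\ket{Φ⁺}^{⊗n}_{A^nA'^n}⊗\tilde{Φ}_{Q_1Q_2}$. Since the $n$ maximally entangled pairs lie entirely on the left, picking a state $τ$ separable across $Q_1{:}Q_2$ with $\tilde{Φ}≤2^{E_{\max}(\tilde{Φ})}τ$ shows that $\ket{Φ⁺}^{⊗n}_{A^nA'^n}⊗τ$ is separable across the chosen cut; hence the $E_{\max}$ of the global state across that cut is at most $E_{\max}(\tilde{Φ})$.

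The second step is to note that, with respect to this same cut, everything the cheaters do afterwards is LOCC: $\B1$'s generalized measurement acts only on $A'^nQ_1$ (left), broadcasting $C$ to $\B2$ is classical communication across the cut, $\B2$'s processing acts only on $Q_2C$ (right), and $\A2$ leaves $A^n$ untouched until the final basis measurement. By Property~\ref{Prp:Emaxmon} the $E_{\max}$ of the resulting global state across the cut is still $≤E_{\max}(\tilde{Φ})$; discarding $\B1$'s leftover registers on the left (a local operation) gives $E_{\max}(A^n;CQ_2)_σ≤E_{\max}(\tilde{Φ})$, and Theorem~\ref{Thm:EmaxvsHmin} then yields $\tfrac1nH_{\min}(A^n|CQ_2)_σ≥-\tfrac1nE_{\max}(\tilde{Φ})$.

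Finally, I would plug this into Lemma~\ref{Lmm:DFW15C11} and use that $γ$ is non-decreasing — immediate from its definition, since $g(α)=h(α)+α-1$ has derivative $\log_2\tfrac{1-α}{α}+1>0$ on $[0,\tfrac12]$, so $g^{-1}$ is increasing there and matches $γ(h_{\min})=h_{\min}$ continuously at $h_{\min}=\tfrac12$. This gives $H_{\min}(X^n|Θ^nCQ_2)_σ≥\tfrac12[nγ(-\tfrac1nE_{\max}(\tilde{Φ}))-1]$, i.e. $\tfrac1nH_{\min}(X^n|Θ^nCQ_2)_σ≥\tfrac12[γ(-\tfrac1nE_{\max}(\tilde{Φ}))-\tfrac1n]≥λ$, which is exactly the $λ$-security of Definition~\ref{Def:security}. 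I expect the only delicate point to be the bookkeeping of the bipartition in the first step: $A^n$ must be grouped with \emph{all} of $\B1$'s quantum registers, so that sending $A'^n$, measuring it and broadcasting $C$ are manifestly local/classical across the $Q_1{:}Q_2$ cut — i.e. so that $\tilde{Φ}$ really is the only quantum resource crossing the cut between the two dishonest parties.
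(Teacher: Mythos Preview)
Your proof is correct and follows essentially the same route as the paper: both choose the bipartition $\A1\A2\B1:\B2$ (your ``left'' $=A^nA'^nQ_1$, ``right'' $=Q_2C$), observe that the only initial entanglement across it is $\tilde{Φ}$, apply the LOCC monotonicity of $E_{\max}$ (Property~\ref{Prp:Emaxmon}) together with Theorem~\ref{Thm:EmaxvsHmin} to get $H_{\min}(A^n|CQ_2)_σ≥-E_{\max}(\tilde{Φ})$, and then feed this into Lemma~\ref{Lmm:DFW15C11} via the monotonicity of $γ$. Your write-up is simply more explicit about the bookkeeping (the initial separable witness, the justification that $γ$ is increasing), but there is no substantive difference.
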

\begin{proof}
We will look at the entanglement between $\B2$ and the other partners 
$\A1\A2\B1$. The only entanglement which exists 
at the beginning of the protocol comes from $\tilde{Φ}$. 
Then all the operations specified by the protocol,
as well as the ones allowed in the NEM,
are LOCCs according to the $\A1\A2\B1:\B2$ split. We have therefore 
\begin{align*}
  -E_{\max}(\tilde{Φ})
   &≤-E_{\max}(σ)_{A^n;CQ_2} &&\text{(property \ref{Prp:Emaxmon})}\\
   &≤H_{\min}(A^n|CQ_2)_{σ} && \text{(theorem \ref{Thm:EmaxvsHmin})},
\end{align*}
where $σ$ denotes the state shared by $\A2$ and $\B2$ just 
before their measurements.

Applying the monotonously increasing function $γ$ leads us to
\begin{align*}
  γ\!\left(-\tfrac1nE_{\max}(\tilde{Φ})\right)
   &≤γ\!\left(\tfrac1nH_{\min}(A^n|CQ_2)_{σ}\right)\\
\intertext{Lemma \ref{Lmm:DFW15C11} then gives}
   &≤\tfrac2n H_{\min}(X^n|Θ^nCQ_2)+\tfrac1n,
\end{align*}
which with some reordering and the
definition \ref{Def:security} of $λ$ concludes the proof.
\end{proof}

\begin{Crl}\label{Crl:WSEepsilon}
  Let $ε>0$. \WSE is $ε$-secure in the NEM if
  \[
    E_{\max}(\tilde{Φ})
     ≤ n-s-nh\!\left(\tfrac sn\right)
  \]
  where $s:=1-2\log_2 ε$ and $e$ is the basis of the natural logarithm.
  A slightly more stringent sufficient condition is :
  \[
    E_{\max}(\tilde{Φ})
     ≤ n-s\log_2n+s\log_2\tfrac{s}{2e}
  \]

\end{Crl}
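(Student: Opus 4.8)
The plan is to chain together two facts already in hand. Theorem~\ref{Thm:NEMWSE} guarantees $λ$-security of \WSE in the NEM whenever $λ≤\tfrac12[γ(-\tfrac1n E_{\max}(\tilde{Φ}))-\tfrac1n]$, while Lemma~\ref{Lmm:lambdavsepsilon} says $ε$-security is the same as $λ$-security with $λ=-\tfrac1n\log_2ε$. Substituting, \WSE is $ε$-secure as soon as $-\tfrac1n\log_2ε≤\tfrac12[γ(-\tfrac1n E_{\max}(\tilde{Φ}))-\tfrac1n]$; multiplying through by $2n$ and writing $s=1-2\log_2ε$, this collapses to the single requirement $\tfrac sn≤γ\bigl(-\tfrac1n E_{\max}(\tilde{Φ})\bigr)$.

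It then remains to turn this into a bound on $E_{\max}(\tilde{Φ})$ alone, which means inverting $γ$. Here the key observation is that $E_{\max}$ is non-negative, so the argument $-\tfrac1n E_{\max}(\tilde{Φ})$ always lies $≤0<\tfrac12$; hence on the whole relevant range $γ$ coincides with $g^{-1}$ and is strictly increasing, so $γ^{-1}=g$ there. Therefore assuming $E_{\max}(\tilde{Φ})≤-n\,g(\tfrac sn)$ forces $-\tfrac1n E_{\max}(\tilde{Φ})≥g(\tfrac sn)=γ^{-1}(\tfrac sn)$, hence $γ(-\tfrac1n E_{\max}(\tilde{Φ}))≥\tfrac sn$, which is exactly the sufficient condition derived above. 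Unfolding $g(α)=h(α)+α-1$ gives the first claimed bound, $E_{\max}(\tilde{Φ})≤-n\,g(\tfrac sn)=n-s-n\,h(\tfrac sn)$. (This is non-vacuous only once $n$ is large enough that $\tfrac sn$ lies in the image of $γ$ on non-positive inputs, i.e.\ $n\gtrsim s/γ(0)$, which is implicit in the statement.)

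For the second, slightly more stringent condition I would simply upper-bound the binary entropy. From the elementary inequality $\ln(1-x)≥1-\tfrac1{1-x}$ one gets $-(1-x)\log_2(1-x)≤x\log_2 e$, hence $h(x)≤x\log_2\tfrac ex$ for every $x∈(0,1)$. Taking $x=\tfrac sn$ gives $n\,h(\tfrac sn)≤s\log_2\tfrac{en}s=s\log_2 n+s\log_2\tfrac es$, so
\[
  n-s-n\,h\!\left(\tfrac sn\right)≥n-s\log_2 n-s-s\log_2\tfrac es=n-s\log_2 n+s\log_2\tfrac{s}{2e},
\]
using $-s-s\log_2(e/s)=s(\log_2 s-\log_2 2-\log_2 e)=s\log_2\tfrac{s}{2e}$. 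Thus $E_{\max}(\tilde{Φ})≤n-s\log_2 n+s\log_2\tfrac{s}{2e}$ implies $E_{\max}(\tilde{Φ})≤n-s-n\,h(\tfrac sn)$, hence $ε$-security, which finishes the proof.

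The only genuine obstacle is bookkeeping: getting the direction of the inversion of the monotone $γ$ right, and checking that one stays on the $g^{-1}$ branch throughout — which is exactly where the non-negativity of $E_{\max}$ enters. The remaining steps, the entropy estimate and the rewriting of $-s-s\log_2(e/s)$ as $s\log_2(s/(2e))$, are routine algebra.
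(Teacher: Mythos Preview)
Your proof is correct and follows essentially the same route as the paper's own argument: combine Theorem~\ref{Thm:NEMWSE} with Lemma~\ref{Lmm:lambdavsepsilon} to obtain $\tfrac sn\leq\gamma(-\tfrac1nE_{\max}(\tilde\Phi))$, invert $\gamma$ via $g$, and then bound $h(s/n)$ to get the simplified form. Your treatment is in fact slightly more careful than the paper's, which simply writes ``applying $g=\gamma^{-1}$ to both sides'' without justifying why one is on the $g^{-1}$ branch; your observation that $E_{\max}\geq0$ forces the argument into $(-1,0]\subset(-\infty,\tfrac12)$ is exactly the missing remark, and your explicit derivation of $h(x)\leq x\log_2(e/x)$ fills in what the paper calls ``a straightforward study of the binary entropy function.''
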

\begin{proof}
  According to theorem \ref{Thm:NEMWSE}, the protocol is 
  $λ$-secure for
  \begin{align*}
  \tfrac12\left[γ\!\left(-\tfrac1nE_{\max}(\tilde{Φ})\right)
                    -\tfrac1n\right]
    &≥λ\\
  γ\!\left(-\tfrac1nE_{\max}(\tilde{Φ})\right)
    &≥\tfrac1n+2λ\\
    &=\tfrac1n-\tfrac2n\log_2ε
          \quad(\text{lemma \ref{Lmm:lambdavsepsilon}})\\
    &=:\tfrac{s}{n}\\
\intertext{applying $g=γ^{-1}$ to both sides leads to}
  -\tfrac1nE_{\max}(\tilde{Φ}) &≥ g\!\left(\tfrac sn\right)\\
  E_{\max}(\tilde{Φ}) &≤-n g\!\left(\tfrac sn\right)
    =n-s-nh\!\left(\tfrac sn\right)
  \end{align*}
  which gives the first inequality of the theorem.

  A straightforward study of the binary entropy functions shows that
  \(
    nh\!\left(\tfrac sn\right)≤s\log_2n-s\log_2\tfrac se.
  \)
  Substituting this expression in the above equation concludes the proof.
\end{proof}


We have now all the elements to prove the security of a \QPV{} protocol.
For the sake of simplicity, we limit ourselves to the unidimensional case. 
In this case a \QPV{} protocol involves two verifiers $\{\V1,\V2\}$ 
and a prover $\Prv$ between them. 
The \QPV{BB84} protocol \cite{BCF+14} can be described as follows 
\begin{enumerate}
  \item $\V1$ and $\V2$ privately
     chose the strings $X^n$ and $Θ^n$.
  \item 
    $\V1$ sends to $\Prv$ the quantum state $\bigotimes_i\hat H^{θ_i}\ket{x_i}$. 
  \item $\V2$ sends $Θ^n$ to $\Prv$.
  \item $\Prv$ receives the messages of $\{\V1,\V2\}$ simultaneously. 
    He measures the qubits in the base $Θ^n$ and obtains $\tilde{X}^n={X}^n$. 
    He immediately broadcasts $\tilde X^n$ to $\{\V1,\V2\}$.
  \item $\{\V1,\V2\}$ accept $\Prv$'s position iff $\tilde X^n=X^n$ and if they
    receive this information on time
\end{enumerate}
The timing is such that $\Prv$ has to be at the right place to receive 
both the qubits and $Θ^n$, and then broadcast the measurement result to
$\V1$ and $\V2$ on time.
We refer the reader to \cite{BCF+14} 
for a precise definition of the timing and the correctness condition,
as we are mainly concerned by the cheating strategies.

We now study the security of this protocol against a coalition of
two malicious cheaters $\{\M1,\M2\}$, 
$\M1$ (resp. $\M2$) being closer to $\V1$ (resp. $\V2$) than $\Prv$ is supposed 
to be.
The timing constraints allow them a single round of classical 
communications. 
In the NEM they have access to no quantum communications, except an
initially shared bipartite state $\tilde{Φ}\in\sS({M_1,M_2})$. 
Note that an access to a quantum information channel 
of finite entanglement cost \cite{BBCW13}
can be brought in this model trough the corresponding state  $Φ$.
The possible action of the cheaters are:
\begin{enumerate}
  \item $\M1$ performs a generalized measurement on the qubits sent by $\V1$ and 
    his half $M_1$ of the state $Φ$. He gets a classical quantum system
    $C_1Q_1$ and sends $C_1$ to $M_2$
  \item Depending on $Θ^n$, $\M2$ performs a generalized measurement on 
    his half $M_2$ of the state $Φ$. 
    He obtains a $C_2Q_2$ and sends $C_2$ to $\M1$
  \item Receiving $C_{i±1}$, $\M i$ extracts 
     his best 
     guess $X_i^n$ from $C_1C_2Q_i$ and sends it to $\V i$
\end{enumerate}

This looks like an attack on \WSE in the NEM, where 
$\{\V i\}_i=\{\A i\}_i$ and $\{\M i\}_i=\{\B i\}_i$, with the supplementary
requirement that $\M1=\B1$ has also to output $X^n$.
In particular, it means that any attack on \QPV{BB84} leads to an attack on
\QPV{BB84} in NEM, leading us to our main result :
\begin{Thm}
  \QPV{BB84} is $ε$-secure if the state $Φ$ shared by $\M1$ and $\M2$ verifies
  \[
    E_{\max}(\tilde{Φ})
     ≤ n-s-nh\!\left(\tfrac sn\right)
  \]
  where $s:=1-2\log_2 ε$ and $e$ is the basis of the natural logarithm.
  A slightly more stringent sufficient condition is :
  \[
    E_{\max}(\tilde{Φ})
     ≤ n-s\log_2n+s\log_2\tfrac{s}{2e}
  \]
\end{Thm}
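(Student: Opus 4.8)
The plan is to reduce the security of \QPV{BB84} directly to the security of \WSE in the NEM, which is already established by Corollary \ref{Crl:WSEepsilon}, and thereby inherit both displayed bounds verbatim. The key observation, already highlighted in the text preceding the statement, is that any cheating strategy against \QPV{BB84} using a shared state $\tilde{Φ}$ induces a cheating strategy against \WSE in the NEM with the identifications $\V i=\A i$ and $\M i=\B i$: $\M1$ playing the role of $\B1$ performs the generalized measurement of step \ref{WSE:B1meas}, the single round of classical communication realizes the $\B1→\B2$ (and auxiliary $\B2→\B1$) channels, and the final guesses replace step \ref{WSE:B2final}. Crucially, the \QPV{BB84} cheaters face a \emph{strictly harder} task than dishonest Bobs in \WSE: not only must $\M2=\B2$ output $X^n$, but $\M1=\B1$ must output it as well, from strictly less information. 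Hence the probability that \emph{both} $\M1$ and $\M2$ correctly guess $X^n$ is upper bounded by the probability that $\B2$ alone guesses it in the corresponding \WSE attack.

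The steps, in order, are as follows. First I would fix an arbitrary cheating strategy for $\{\M1,\M2\}$ and describe explicitly the induced \WSE-in-NEM attack: the shared state $\tilde{Φ}∈\sS(M_1,M_2)$ is used as the shared state $\tilde{Φ}∈\sS(B_1,B_2)$, and the classical messages $C_1,C_2$ are routed through the classical channels of the NEM. Second, I would note that $p_{\text{cheat}}$ for \QPV{BB84} equals the probability that $\tilde X_1^n=\tilde X_2^n=X^n$, which is at most the probability that $\B2$'s output equals $X^n$ in the induced attack, since $\B2$ has access to $C_1C_2Q_2$ — exactly the information from which $\M2$ forms $\tilde X_2^n$. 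Third, by Definition \ref{Def:security} together with Lemma \ref{Lmm:lambdavsepsilon} (equivalently Property \ref{Prp:Hminpguess}), this guessing probability is bounded by $2^{-nλ}$ whenever \WSE is $λ$-secure. Fourth, I would invoke Theorem \ref{Thm:NEMWSE} / Corollary \ref{Crl:WSEepsilon}: under the stated bound $E_{\max}(\tilde{Φ})≤ n-s-nh(s/n)$ with $s=1-2\log_2ε$, \WSE is $ε$-secure in the NEM, hence so is \QPV{BB84}. The more stringent sufficient condition follows from the same binary-entropy estimate $nh(s/n)≤s\log_2 n - s\log_2(s/e)$ used in the proof of Corollary \ref{Crl:WSEepsilon}, so nothing new is needed there.

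The main obstacle — really the only nontrivial point — is making the reduction airtight on the \emph{operational} side: one must check that every action available to $\{\M1,\M2\}$ under the \QPV{BB84} timing constraints is in fact available to $\{\B1,\B2\}$ in the NEM, so that the induced attack is legitimate. This amounts to verifying that the single simultaneous round of classical communication permitted to the malicious provers is subsumed by the unlimited classical channels $\A1→\A2$, $\B1→\B2$ of the NEM, and that $\M2$'s dependence on $Θ^n$ is exactly mirrored by $\B2$'s access to $Θ^n$ in \WSE. Once this is granted, the chain $p_{\text{cheat}}^{\QPV{BB84}}≤p_{\text{guess}}(X^n|Θ^nC_1C_2Q_2)≤2^{-nλ}≤ε$ closes immediately, and the theorem is a corollary of Corollary \ref{Crl:WSEepsilon} with no further computation.

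\begin{proof}
  As noted above, any cheating strategy for $\{\M1,\M2\}$ against \QPV{BB84}
  using the shared state $\tilde{Φ}$ induces a cheating strategy for
  $\{\B1,\B2\}$ against \WSE in the NEM, under the identifications
  $\V i=\A i$ and $\M i=\B i$: $\M1$ performs the measurement of step
  \ref{WSE:B1meas}, the single simultaneous round of classical communication
  between $\M1$ and $\M2$ is realized by the unlimited classical channels
  $\B1→\B2$ of the NEM (with an auxiliary $\B2→\B1$ message carrying $C_2$),
  and $\M2$'s dependence on $Θ^n$ is mirrored by $\B2$'s access to $Θ^n$.
  The resulting final state is some $σ∈\C(\A2,\B2)$.

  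In this induced attack $\B2$ holds the classical information $Θ^nC_1C_2$
  and the quantum information $Q_2$ — precisely the data from which $\M2$
  forms his guess $\tilde X_2^n$. Since \QPV{BB84} is won only when
  $\tilde X_1^n=\tilde X_2^n=X^n$, we have
  \[
    p_{\text{cheat}}
      ≤ \Pr[\tilde X_2^n=X^n]
      ≤ p_{\text{guess}}(X^n|Θ^nC_1C_2Q_2)_σ .
  \]
  By Property \ref{Prp:Hminpguess} and Lemma \ref{Lmm:lambdavsepsilon},
  if \WSE is $λ$-secure in the NEM then this last quantity is at most
  $2^{-nλ}$, hence $p_{\text{cheat}}≤2^{-nλ}$ as well; that is, $\varepsilon$-security
  of \WSE in the NEM implies $\varepsilon$-security of \QPV{BB84}.
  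Applying Corollary \ref{Crl:WSEepsilon} now yields both stated sufficient
  conditions on $E_{\max}(\tilde{Φ})$.
\end{proof}
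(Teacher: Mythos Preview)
Your proof is correct and follows essentially the same reduction as the paper's: any \QPV{BB84} attack yields a \WSE-in-NEM attack whose success probability for $\B2$ is at least $p_{\text{cheat}}^{\text{\QPV{}}}$, so $\varepsilon$-security transfers directly from Corollary~\ref{Crl:WSEepsilon} (the paper phrases this as a contradiction, you as a chain of inequalities, but the content is identical). One small caveat: the NEM as defined in the paper has only a $\B1\!\to\!\B2$ classical channel, so your ``auxiliary $\B2\!\to\!\B1$ message carrying $C_2$'' is not actually available in the model --- this is harmless, however, since your argument uses only $\B2$'s output, and indeed the paper's own reduction explicitly drops the $\M2\!\to\!\M1$ communication.
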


\begin{proof}

  Corollary \ref{Crl:WSEepsilon} ensures that \WSE is $ε$-secure in the NEM
  against adversaries using $Φ$ as resource.
  We will now prove by contradiction that \QPV{BB84} is also $ε$-secure.

  Let us suppose it is not the case: 
  $\M1$ and $\M2$ have a cheating strategy 
  winning in \QPV{BB84} 
  with probability $P_{\text{cheat}}^{\text{\QPV{}}}>ε$. 
  They can use this strategy as $\B1$ and $\B2$ in a \WSE protocol,
  without the $\M2→\M1$ communication and the final broadcasts of $X_i^n$,
  and using $X_2^n$ as guess for $X^n$.
  Their probability to cheat \WSE is
  \(
    P_{\text{cheat}}^{\text{\WSE}}=P(X^n_2=X^n)
      ≥P_{\text{cheat}}^{\text{\QPV{}}}>ε
  \):
  \WSE is not $ε$-secure,
  which is contradictory with corollary \ref{Crl:WSEepsilon}.

  Therefore, \QPV{BB84} is $ε$-secure.
\end{proof}


We have shown the security of the practical protocol \QPV{BB84} in one dimension against a 
coalition of cheaters sharing an entangled state of max- relative entropy
of entanglement $E_{\max}(Φ)≤n - O(\log n)$. This bound is the best known to
date for a \QPV{} protocol and is essentially tight for \QPV{BB84}, since 
an attack using $n-O(1)$ ebits is known \cite{KMS11,LauLo11}. 
While this method probably generalizes to 
the multidimensional case using tools from \cite{Unruh14}, as well as to other protocols, 
like \QPV{MUBs} \cite{BeigiKoenig11}
and non-Pauli variants of \QPV{BB84} \cite{KMS11,LauLo11},
it will not approach the exponential upper bound of these protocols.
This method is also useless when $\M1$ and $\M2$ have access to an unlimited
quantum channel 
(but did not use it for some reason to share entanglement 
before the protocol starts),
while the bound of \cite{TFKW13} works in this case.

\begin{acknowledgements}
We thank Christian Schaffner, Anthony Leverrier, Kaushik Chakraborty,
Omar Fawzi and Jędrzej Kaniewski for stimulating discussions.
FG specially thanks Christian Schaffner for introducing him to 
position based cryptography, and for maintaining the webpage 
\cite{SchaffnerPBQCWeb}, a precious resource to begin in this domain.
\end{acknowledgements}

\bibliography{QCrypt}

\end{document}